\def\ps@headings{%
\def\@oddhead{\mbox{}\scriptsize\rightmark \hfil \thepage}%
\def\@evenhead{\scriptsize\thepage \hfil \leftmark\mbox{}}%
\def\@oddfoot{}%
\def\@evenfoot{}}
\makeatother \pagestyle{headings}
\newcommand{\tab}{\hspace*{2em}}
\newtheorem{mydef}{Definition}
\newtheorem{theorem}{Theorem}
\begin{document}

\title{Centralized and Cooperative Transmission of Secure Multiple Unicasts using Network Coding}
\author{\IEEEauthorblockN{Shahriar Etemadi Tajbakhsh \tab Parastoo Sadeghi \tab Rodney Kennedy\\}
\IEEEauthorblockA{Research School of Engineering\\
College of Engineering and Computer Science\\
The Australian National University\\
Canberra, Australia\\
Emails: \texttt{\{shahriar.etemadi-tajbakhsh,
parastoo.sadeghi,rodney.kennedy\}@anu.edu.au}}}

\maketitle
\begin{abstract}
We introduce a method for securely delivering a set of messages to a
group of clients over a broadcast erasure channel where each client
is interested in a distinct message. Each client is able to obtain
its own message but not the others'. In the proposed method the
messages are combined together using a special variant of random
linear network coding. Each client is provided with a private set of
decoding coefficients to decode its own message. Our method provides
security for the transmission sessions against computational
brute-force attacks and also weakly security in information
theoretic sense. As the broadcast channel is assumed to be
erroneous, the missing coded packets should be recovered in some
way. We consider two different scenarios. In the first scenario the
missing packets are retransmitted by the base station (centralized).
In the second scenario the clients cooperate with each other by
exchanging packets (decentralized). In both scenarios, network
coding techniques are exploited to increase the total throughput.
For the case of centralized retransmissions we provide an analytical
approximation for the throughput performance of instantly decodable
network coded (IDNC) retransmissions as well as numerical
experiments. For the decentralized scenario, we propose a new IDNC
based retransmission method where its performance is evaluated via
simulations and analytical approximation. Application of this method
is not limited to our special problem and can be generalized to a
new class of problems introduced in this paper as the cooperative
index coding problem.

\end{abstract}
 \setstretch{2}
\section{Introduction}
A large volume of traffic in data communication networks is dedicated to serving the
demands of individuals or the so-called \emph{unicasts}. With the
rapid growth of the number of network users and their appetite for reliable high data rate multimedia applications, maintaining a desirable quality of service requires
careful and innovative design at different layers of the communication protocol. This is especially true for wireless communication networks, given their scarce
bandwidth resources, and for unicast traffic, given that users
are essentially competing with each other for more bandwidth.

Traditionally, unicasts have been considered as independent flows of
information which should be directed separately toward the intended
destinations. The pioneering work by Ahlswede et al.
\cite{Flow2000}, however, has changed this rigid view of the
network. Originally, \cite{Flow2000} showed that by allowing
different information flows to be combined at intermediate nodes,
the capacity of multicast networks can be achieved. This concept,
which is now known as network coding, has since been studied in
other communication scenarios and network settings including
multiple unicasts \cite{LunUnicast, COPE, MultipleUnicastHARQ} and
broadcast
\cite{BroadcastNguyen,OnlineBroadcast,EryilmazOzdaglar,SamehDelay,
FeedbackAdaptive, ARQMedard} in wireless networks that are subject
to packet erasures. It is noted that optimization of the network
 coded schemes to achieve the best throughput or delay performance in such networks is, in general, a highly non-trivial problem \cite{LunUnicast,OnlineBroadcast}
 and is the subject of on-going research \cite{FisherRecent,SorourRecent,AmyWCNC}. An additional important issue in the case of multiple unicasts is that by mixing different flows and broadcasting them
 over the wireless channel, the secrecy of users' traffic should not be compromised.

The main aim of this paper, which is an extension of the previous
  work by the authors \cite{ITW2012}, is to answer this question: \emph{How can
  one exploit the benefits of network coding for serving multiple unicasts over an
  erasure broadcast wireless channel while maintaining bandwidth efficiency and users' information secrecy?}. To this end, we consider a wireless communication system consisting of one base station
  and a number of users or clients. Initially, the base station uses network coding to combine the information of different users together (in the form of linearly coded packets) and
  broadcasts them to all the users. For now assume that users' information secrecy is somehow ensured. The coded packets are subject to erasures in the wireless channel. Then we study two different
  settings in the system. In the first setting, which we will refer to as the \emph{centralized case}, we assume that the only
  possible way of communication is between each user and the base
  station (the users cannot exchange any information with
  each other). Therefore, the base station will be in charge for retransmission of the missing
  packets until each client can decode and obtain its own information. In the second setting, which we will refer to as the \emph{decentralized or cooperative case},
  after the initial connection to the base station, users who are in vicinity of each other are allowed to
  cooperate with each other (by exchanging information) to compensate for the missing packets and eventually obtain their intended information. The main advantage of the decentralized scenario is that the short range
links among nearby users are faster, cheaper and more reliable.
Moreover a considerable fraction of the bandwidth of the base
station, which was supposed to be dedicated to retransmissions, is
freed for other purposes. The ultimate goal in both settings is to
complete the unicast sessions using as small number of
retransmissions as possible while maintaining users' information
secrecy with proven computational cost of eavesdropping.


To clarify the problem consider the following example where
   four wireless clients $c_j$, $1\le j \le 4$ are each interested in
  downloading a distinct message $x_j$. Each message is assumed
  to be an element of a finite field $\mathbb{F}_{q}$. The base
  station linearly combines these messages to generate $p_1$, $p_2$, $p_3$ and $p_4$ and broadcasts them to all clients.
  The relationship between the original messages $x_1$ to $x_4$ and the coded ones $p_1$ to $p_4$ is captured through
  \[
 \begin{pmatrix}
  x_1\\
  x_2 \\
  x_3 \\
  x_4
 \end{pmatrix} =
 \begin{pmatrix}
  \alpha_{11} & \alpha_{12} & \alpha_{13} & \alpha_{14} \\
  \alpha_{21} & \alpha_{22} & \alpha_{23} & \alpha_{24} \\
  \alpha_{31} & \alpha_{32} & \alpha_{33} & \alpha_{34} \\
  \alpha_{11} & \alpha_{12} & \alpha_{13} & \alpha_{14}
 \end{pmatrix}\begin{pmatrix}
  p_1 \\
  p_2\\
  p_3 \\
  p_4
 \end{pmatrix}
\]
where $\alpha_{ij}\in \mathbb{F}_q$ are called the decoding
coefficients. However to maintain users' message secrecy, each row of the
decoding coefficient matrix is exclusively made known to the
corresponding user using a combination of public and private channels. For instance, $\alpha_{41}$, $\alpha_{42}$,
$\alpha_{43}$, $\alpha_{44}$ are only provided to $c_4$ by the base
station.


Now imagine that each client has received a subset of $P=
\{p_1,p_2,p_3, p_4\}$ due to downlink erasures. As an example,
suppose that the users have initially received
$P_1=\{p_1,p_2,p_3\}$, $P_2=\{p_2,p_3,p_4\}$, $P_3=\{p_3,p_4,p_1\}$,
and $P_4=\{p_4,p_1,p_2\}$, respectively. Also assume that
$\alpha_{jj} = 0$ but $\alpha_{ij} \neq 0$ for $i \neq j$, meaning
that each client $c_j$ does not need $p_j$ but needs all other
packets for decoding its message. So it is clear that each client
has received an unnecessary packet $p_j$ and is still missing one
needed packet. Without network coding, four separate retransmissions
are required to complete each clients' collection. If network coding
is used in the centralized setting, only one transmission by the
base station such as $p_1+p_2+p_3+p_4$ is enough, which is decodable
by all the clients. In the cooperative setting, since none of the
clients has the entire set of coded messages, the total number of
exchanged packets might exceed that in the centralized case. In this
particular example, any client is able to satisfy the demands of the
other three. For example, client $c_2$ can transmit $p_1+p_3+p_4$.
But clearly another transmission is needed to deliver $p_1$ to
$c_2$. Therefore, two transmissions are needed in total, but as
mentioned earlier, these transmissions are generally faster and more
reliable.

\subsection{Our Contributions and Distinctions with Related Work}
We propose a novel method to ensure the secrecy of the users' network decoding coefficients. The core idea is for the base station to privately distribute two permutation functions to each user
 which are then employed to decrypt the location and value  of individual decoding coefficients from two commonly-available location and value sets. We provide the computational cost for an eavesdropper
  for deciphering each user's message. We highlight that there is a key difference between the way we ensure message secrecy using network coding and those such as \cite{SecureNetCod, WeaklySecure, FreeCipher}.
  In particular, the common approach in the literature is to assume
that the eavesdropper becomes aware of the \emph{coding} coefficients along
the wiretapped links because they are publicly broadcast. In this paper, coding coefficients are not broadcast along with the coded packets. Instead, we provide a technique so that the \emph{decoding} (and not the coding)
 coefficients corresponding to each client are privately
delivered to each client. Therefore, an eavesdropper with finite
computational resources would not be able to decode any message even
if it has received all the coded packets. \textbf{Recently we have
shown that our proposed method is provably weakly secure in an
information theoretic sense \cite{ISIT2013}, i.e. the eavesdropper
even with an unbounded computational power can not obtain any
\emph{meaningful} information. The idea of hiding network coding
coefficients has been studied in \cite{CryptAnalysisNetCod} for
multicast scenarios and \cite{MultiResolution} for transmitting
different layers of a multimedia file where each user has access to
a certain number of layers according to its subscription level. Our
scheme considers multiple unicasts and also it is easily extensible
to more general scenarios as it will be discussed in Section
\ref{sec:Broadcast}.}

The centralized retransmission scenario from the base station to the
clients is essentially an index coding problem \cite{IndexCodMain}.
\footnote{\textbf{In the index coding problem, each client has
received a subset of packets $P_j \subset P$ which excludes $p_j$
and requires the single packet $p_j$.
 The goal of the base station who knows all the packets in $P$ is to satisfy the demand of all clients with the minimum number of transmissions using network coding.
 In our problem, if a client is missing multiple packets, it can be broken into multiple virtual clients each with a single missing packet and the index coding problem applies.}}
 Our contribution in this case is to provide an iterative expression to approximate the number of retransmissions from the base station that is required to satisfy the demands
 of all clients in the presence of packet erasures during retransmissions. Our approach builds upon graph representation of the index coding problem in \cite{IndexCodingHeuristics} and the heuristic algorithms in \cite{SamehDelay}
 to find maximal cliques in the graph \textbf{and removing them from the graph by transmitting an instantly decodable combination of corresponding packets}. More specifically, the analysis is based on \textbf{approximating the typical size of the maximal clique found over a
  random graph with appropriately chosen parameters which are updated after each clique removal at each transmission round} . We will verify the theoretical results via simulations.

The decentralized retransmission scenario, in which the clients
cooperate to obtain their missing coded packets, is coined as
cooperative date exchange problem in the literature \cite{ITW2010,
ISIT2010, NetCod2011} and had been also studied in \cite{Repair2008,
RepairNetcod}. To the best of our knowledge, however, all previous
work on the subject have considered the broadcast flow, in which all
clients are interested in all messages.

Our work is more general because it considers multiple unicast flows
with the important notion of secrecy and includes the broadcast
setting as a special case
 (in which case all decoding coefficients are announced publicly to all clients). Our contribution is to first propose \textbf{a random graph-based model for the decentralized setting} and a variant of the
heuristic algorithm in \cite{SamehDelay} for finding maximal cliques
at the clients. \textbf{In the suggested graph model, a local graph
is associated to each client as well as a global graph which is
similar to the graph in \cite{SamehDelay}. Each local graph is
produced according to the set of packets which are held by that
specific client to make sure that client is able to map cliques
removed from its local graph to an instantly decodable packet. The
largest clique over all the local graphs is chosen for removal. All
the other local graphs and the global graph are updated accordingly
upon decoding and obtaining a new packet}. We will verify the
theoretical results via simulations.

Before concluding this section, we highlight some particular
differences with \cite{MultipleUnicastHARQ,weakCCDE} and our
preliminary work in \cite{ITW2012}. The work in
\cite{MultipleUnicastHARQ} provided analytical expressions to
approximate the throughput of network coded hybrid ARQ systems for
multiple unicasts. However, they did not consider the issue of
secrecy. The work in \cite{weakCCDE} considered the problem of
weakly secure coded cooperative data exchange. However, they
considered a broadcast (and not multiple unicasts) setting.
\textbf{In other words, all the target recipients of the entire set
of packets in \cite{weakCCDE}, can decode and obtain all the packets
but the packets remain hidden from an eavesdropper wiretapping some
of the links, however in our work different clients can decode their
own distinct message but the message remains hidden from other
clients and also any external eavesdropper.} \textbf{Finally,the
current paperextends our previous work \cite{ITW2012} in different
ways. \cite{ITW2012} only considers the decentralized scenario where
in the current paper, centralized scenario is added as well.
Moreover the heuristic method of generating instantly decodable
packets is computationally more efficient than the one used in
\cite{ITW2012} and the throughput performance of the proposed method
is approximated using some results from random graph theory.}
\section{System Model and Preliminaries}
We consider a group of $n$ clients $C = \{c_1,\dots,c_n\}$. Each
client $c_i$ is interested in \emph{securely} receiving a distinct
message $x_i\in X=\{x_1,\dots,x_n\}$, where
each message $x_i$ is an element of a finite field  ${\mathbb{F}_q}$.\footnote{In a practical setting, each message can be composed of
multiple elements of ${\mathbb{F}_q}$, where similar operations are
applied to all the elements within a message. The number of elements in each message, $m$,  determines the message and coded packet length.} A transmission scheme is said to be
\emph{secure} if each client $c_i$ is able to obtain its own
message $x_i$, but not the others' messages. We denote the vector of
all messages as $\mathbf{X}=[x_1,\dots,x_n]$.

Our proposed method includes three phases, which are briefly
explained here and will be discussed in detail in the rest of this
paper. For more clear exposition, we will mainly focus on the
general case of multiple unicasts with message secrecy. However, in
Section \ref{sec:Broadcast} we discuss how the results of this paper
include public message broadcast as a special case.
\begin{itemize}
\item \emph{Broadcast}: In the broadcast phase, the messages are
combined using a special form of random linear network coding (RLNC)
\cite{RandomNetCod} and the resulting packets which are denoted by
$P=\{p_1,\dots,p_n\}$, are broadcast to all the clients. The vector
of the generated packets is denoted by $\mathbf{P}=[p_1, \dots,
p_n]$. There are two main differences with practical/traditional
RLNC schemes. The first difference is that the coding coefficients
are not sent along in the header of the packet, as it is often the
case in practice \cite{Chou03practicalnetwork}. The reason for this
is to maintain message secrecy as will be briefly explained shortly
(see Section \ref{sec:KeySharing} for more details). So we can think
of this RLNC as a type of \emph{blind} RLNC.
 The second difference is that coding is done in such a way that not all coded packets in $P$ will be needed to decode a particular message $x_j$ (see also Section \ref{sec:Broadcast}).
 So we can think of this RLNC as \emph{partial} RLNC. Each
client $c_j$ might miss any of the packets in $P$ with a probability
$p_e^j$. For simplicity, we assume $p_e^j=p$ is equal for all the
clients and is constant during the transmission process.
\item \emph{Key Sharing}: The encoding process forms a mapping
between each original message $x_i$ and the set of generated packets
$P$ represented by $x_i=f_i(p_1,\dots,p_n)$. Each function $f_i$ is
privately and securely delivered to the corresponding receiver
(\textbf{e.g. as a private key}), $c_i$, during the key sharing
phase. This function should not be guessable for an eavesdropper
with limited computational resources.
\item \emph{Packet Recovery}: As mentioned earlier, the broadcast
channel between the base station and the clients is modeled as a
packet erasure channel. Consequently, after the broadcast phase,
each client $c_i$ might have received a subset of the packets
represented by $\Gamma_i$. Depending on function $f_i$, client $c_i$
might need to receive some more packets (not necessarily all packets
in $P\backslash \Gamma_i$) to be able to obtain its own message
$x_i$.
\end{itemize}
 In this paper, we consider
two different schemes for packet recovery. The first one is a
centralized scheme where the base station is in charge for
retransmissions. In this case the problem can be formulated as a
standard index coding. In the second scheme, the clients are assumed
to be separated from the base station once each packet is received
by at least one client and the clients cooperate with each other by
exchanging packets to obtain all the packets they need to decode.
This scheme can be regarded as \emph{cooperative or decentralized
index coding}.

The broadcast phase is common between centralized and cooperative schemes, which is the subject of Section
\ref{sec:Broadcast}. In Section \ref{sec:Security}, we propose our key sharing scheme and evaluate its secrecy level. Sections \ref{sec:BaseStation} and
\ref{sec:Cooperation} are dedicated to the packet recovery phase for
the centralized and cooperative schemes, respectively. These two sections include heuristic algorithms for
efficient network coded retransmissions, as well as some theoretical
approximations on the performance of the proposed algorithms and
numerical experiments. The paper is concluded in
Section \ref{sec:Conclusion}.

\section{Broadcast Phase}\label{sec:Broadcast}
In the broadcast phase, messages are combined together using linear
network coding and are broadcast to all the clients. However, the
coefficients are not publicly announced. Instead, each
client is provided with a private set of decoding coefficients which
enable each client to decode and obtain its own message but not the
others'. Each client might need to receive a subset of packets
depending on how the decoding coefficients are defined by the base
station.

We define the decoding matrix $\mathbf{A}$ and denote the $i$-th
row of matrix $\mathbf{A}$ by $\mathbf{A}_i$. In its most general
form, at each row $\mathbf{A}_i$, the number of non-zero elements is
denoted by $r_i$. The rest of elements at each row are set be zero.
The non-zero elements at each row are placed randomly and are chosen
randomly from the finite field $\mathbb{F}_q$. In Section
\ref{sec:KeySharing}, we will describe how the base
station can use a method with low communication overhead for sharing the keys with the
clients.
 For simplicity, in this paper we
assume $r_i=r$ is identical for all the clients. The set of indices
of non-zero elements for each client is represented by $I_i$. The
base station solves the following system of linear equations to
generate the set of packets $P$.
\begin{equation}
\mathbf{X}^T=\mathbf{A}\mathbf{P}^T
\end{equation}
The resulting set of packets are broadcast to all the
clients.\footnote{The probability that the decoding matrix
$\mathbf{A}$ is non-singular is given in \cite{ContentSparse}.}
According to the above equation, each message $x_i$ is a linear
combination of a subset of packets $R_i=\{p_j: \forall j\in I_i\}$,
i.e. $x_i=\mathbf{A}_i \mathbf{P}$. In other words, each client
$c_i$ needs to obtain the set $R_i$ to be able to decode and
retrieve its own message $x_i$. Moreover, each client would not be
able to retrieve the message of another client $c_j$ as it does not
have access to $\mathbf{A}_j$. \textbf{Fig. \ref{fig:myfigVeryNew1}
provides a schematic vision of the coding approach proposed in this
paper.}

If each client $c_i$ is aware of $I_i$, it would be able to turn off
its receiver when the other packets are transmitted by the base
station to save its battery. But this will reduce the amount of side
information available at each client. This side information is later
exploited to enable either the base station or the clients to
combine more packets in a single transmission during the packet
recovery phase. Therefore, to achieve a higher bandwidth efficiency,
if the clients are not informed of $I_i$ and keep their radios
switched on during the entire broadcast phase, the total number of
retransmissions by the base station or the clients is expected to be
smaller.

Interestingly, the coding method of the this paper shifts the burden
of matrix inversion operations to the base station, which often has
very large computational resources and only a linear summation over
a finite field is left for decoding at mobile clients. This will
provide more freedom to choose larger field sizes. As it will be
shown later in Section \ref{sec:Security}, operations over larger
field sizes are less vulnerable against brute force guesses by a
wiretapper or an eavesdropping client. Moreover, our proposed scheme
can be easily extended to more general scenarios where each client
is interested in receiving an arbitrary subset of messages. Suppose
a client $c_i$ wishes to receive a subset $U \subseteq X$, where
$U=\{x_{i1},\dots, x_{i|U|}\}$. To apply the proposed coding method
in this paper, the base station only needs to provide all the
corresponding rows of matrix $\mathbf{A}$ required for decoding and
retrieving the set $U$ to client $c_i$, i.e. $\mathbf{A}_i, \forall
i\in \{i_1,\dots,i_{|U|}\}$. This has been extensively discussed in
\cite{ITW2012}

\begin{figure}[t]
\begin{center}
  \includegraphics [scale=0.5]{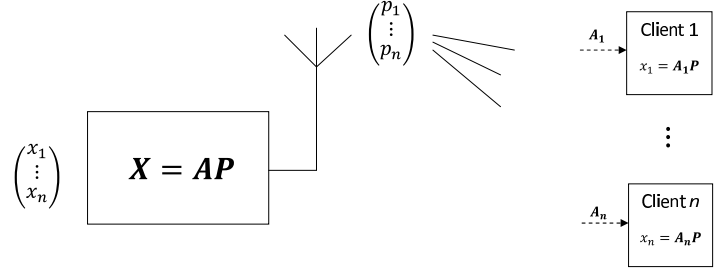}
  \caption{The proposed coding scheme}\vspace{-1em} \label{fig:myfigVeryNew1}
\end{center}
\end{figure}

\section{Key Sharing Phase}\label{sec:KeySharing}
As mentioned earlier, the vector $\mathbf{A}_i$ contains crucial
information for decoding at client $c_i$. Therefore, it should be
delivered privately and securely to client $c_i$. We note that for
each $\mathbf{A}_i$, the location of non-zero coefficients and their
values in the finite field are sufficient for its complete
description. The core idea of our key sharing method is as follows.
Each user is securely provided with two unique permutation functions
for deciphering the locations and values of its decoding
coefficients.
 The role of these functions is to map publicly known location and value sets, which are broadcast to everyone by the base station, into privately-known individual keys. In
other words, each client infers a different \emph{meaning}
(decryption) from two common sets of locations and values. Below, we
formalize our proposed method and quantify the computational cost of
breaking the ciphers in a brute-force manner \textbf{if the
probability distribution of the target message over its alphabet is
known to eavesdropper.}


\begin{mydef}
We denote a permutation of the elements of a set $B$ with $\Pi_B^i:
B\rightarrow B$, which is a one-to-one and covering function that
(randomly) maps an element $\alpha \in B$ to another element in
$\beta \in B$ ($\Pi_B^i(\alpha)=\beta$).
Here $i$ is an arbitrary index, which will be used later to specify a client.
\end{mydef}
\begin{mydef}
We denote the set of all non-zero elements in $\mathbb{F}_q$ by
$Q=\{\alpha_1,\dots,\alpha_{q-1}\}$. Also, $\tilde{N}=\{1,\dots,n\}$
is the set of all positive integers smaller than $n+1$.
\end{mydef}

The decoding coefficients are generated securely as follows:
\begin{enumerate}
\item Each client $c_i$ is given a pair of unique permutation functions
$\Pi_Q^i$ and $\Pi_{\tilde{N}}^i$ as private keys.
\item The base station generates a subset of $Q$, denoted by $Z_r$, consisting of $r$ elements randomly drawn from $Q$. It also generates a subset of $\tilde{N}$, denoted by $Y_r$,
 consisting of $r$ elements randomly drawn from $\tilde{N}$. The elements of the sets $Y_r$ and
$Z_r$ are represented by $y_i$ and $z_i$ for $i=1,\dots,r$,
respectively. These sets act as public keys and are publicly broadcast to
all the clients.
\item Each row $\mathbf{A}_i$ of the matrix of the decoding coefficients is
generated by setting
$\mathbf{A}_i(\Pi_{\tilde{N}}^i(y_j))=\Pi_Q^i(z_j)$ for
$j=1,\dots,r$.
\item Each client $c_i$ uses the reverse
functions $\Pi_{\tilde{N}}^i$ and $\Pi_Q^i$ to decrypt the set of
decoding coefficients from $Z_r$ and $Y_r$.

\end{enumerate}

Although the clients are receiving the same sets $Z_r$ and $Y_r$
from the base station, the decryption process in each client $c_i$
results in different set of decoding coefficients since the
functions $\Pi_{\tilde{N}}^i$ and $\Pi_Q^i$ are different for each
client. In other words, we use a private key ciphering method to
deliver the decoding coefficients to each client, where the vector
of decoding coefficients, $\mathbf{A}_i$, can be envisaged as a
secret key for the client $c_i$ \textbf{to decode its own message}.
\textbf{The main advantage of using this method is to reduce the
overhead of updating the decoding coefficients as each time only
broadcasting one pair of common sets $Z_r$ and $Y_r$ is sufficient
for updating the decoding coefficients of all the clients. Updating
the decoding coefficients might be necessary to increase the level
of secrecy against brute force guess as it will be discussed in the
following.}


\subsection{Computational Cost of Brute-force Guesses}\label{sec:Security}
One main concern of this paper is to provide a secure method of
transmission for multiple unicasts over a shared broadcast channel
while the advantages of cooperation and/or network coding is
incorporated. That this type of privacy can be provided by hiding
the decoding coefficients and transmitting the corresponding
decoding coefficients privately to each client using any secure
method. However, one might be concerned about the privacy level that
can be guaranteed. In other words, how easy it is for one of the
clients, say $c_i$ (or an external eavesdropper who is tapping the
channel) to obtain some information about an individual message
$x_j$.

Clearly there is not an algebraic method to find the solution of a
system of linear equations while the coefficients are not known.
However, since the operations are performed over a finite field, the
total number of possibilities for choosing the decoding coefficients
is limited by the size of the field $q$, the number of clients $n$
and the \emph{coding rate}, defined as the number non-zero elements
$r$ in each $\mathbf{A}_i$. In other words the adversary may run a
set of brute-force trials to examine all the possibilities of
decoding coefficients to observe a message which would be most
likely to be an original message. The following theorem determines
the average computational resources needed to reveal a message by
applying a finite set of trials in a brute force scenario.
\begin{theorem} \label{computational}
An eavesdropping client $c_i$ needs to try
$\frac{\min\{(q-1)^r,(q-1)!n!\}+1}{2}$ on average for $q>2$, to make
a correct guess about the decoding coefficients $\mathbf{A}_j$ of
the client $c_j$, $j\neq i$.
\end{theorem}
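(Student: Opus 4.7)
The plan is to recast the eavesdropper's task as the problem of locating one distinguished element inside a finite set of secret candidates, bound the cardinality of that set by the cheaper of two natural brute-force strategies, and then close with the elementary observation that uniform sequential enumeration of a set of size $N$ finds a fixed target after $(N+1)/2$ trials on average.

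First I would enumerate the two attacks available to $c_i$. In the direct attack the eavesdropper forgets about the key-sharing layer and simply guesses the $r$ nonzero entries of $\mathbf{A}_j$ directly: since each entry is drawn from $Q=\mathbb{F}_q\setminus\{0\}$, the number of such value assignments is bounded above by $(q-1)^r$. In the key attack the eavesdropper instead guesses the private permutation pair $(\Pi_Q^j,\Pi_{\tilde N}^j)$ that was used in the key-sharing phase; since $|Q|=q-1$ and $|\tilde N|=n$, there are exactly $(q-1)!\,n!$ such pairs. A rational adversary picks whichever strategy has the smaller search space, so the effective cardinality of the secret set is $N=\min\{(q-1)^r,\,(q-1)!\,n!\}$.

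Next I would argue that each candidate can in fact be tested. Using the assumption stated just before the theorem that the prior distribution of $x_j$ over $\mathbb{F}_q$ is known to the eavesdropper, for each candidate $\hat{\mathbf{A}}_j$ the eavesdropper computes $\hat x_j=\hat{\mathbf{A}}_j\mathbf{P}$ (it already has $\mathbf{P}$, having wiretapped the broadcast) and filters the candidates against that distribution (likelihood test or a structural marker in the plaintext). The hypothesis $q>2$ enters precisely here: for $q=2$ each candidate decodes to a single bit, so about half of all random guesses would spuriously pass any such test, whereas for $q>2$ the false-positive rate is small enough that the surviving candidate is almost surely the true one. With this reduction in hand, the conclusion follows from the standard identity $\tfrac{1}{N}\sum_{k=1}^{N}k=(N+1)/2$ for the expected position of the target under uniform random ordering of trials; substituting the value of $N$ yields the expression in the statement.

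The step I expect to be the main obstacle is precisely the verification/uniqueness argument: strictly speaking, several distinct $\hat{\mathbf{A}}_j$ may map $\mathbf{P}$ to plaintexts that look equally plausible under the assumed distribution, in which case the attacker would need to continue past the first ``hit''. I would address this by noting that different $\hat{\mathbf{A}}_j$ generically produce unrelated elements of $\mathbb{F}_q$, so once $q>2$ the measure of candidates that both differ from the true key and pass the plausibility test tends to zero relative to $N$, leaving $(N+1)/2$ as the sharp average cost of a successful brute-force attack. The counting portion, by contrast, is purely combinatorial and should go through cleanly once the two strategies are set up as above.
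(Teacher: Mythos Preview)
Your approach is essentially the same as the paper's: reduce to the one-correct-key-out-of-$N$ search problem, identify $N=\min\{(q-1)^r,(q-1)!\,n!\}$ as the cheaper of the two brute-force strategies, and apply the elementary average $(N+1)/2$.

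There is, however, one genuine gap in your write-up. You assert that the direct attack has search space $(q-1)^r$, i.e.\ that only the $r$ nonzero \emph{values} of $\mathbf{A}_j$ are unknown, but you never explain why the \emph{positions} $I_j$ of those nonzero entries are already available to the eavesdropper. Without that, the honest count for the direct attack would be something like $\binom{n}{r}(q-1)^r$ (or larger), and the minimum in the statement would not be correct. The paper supplies exactly this missing piece: during the packet-recovery phase each client must publicly announce the indices of the packets it needs in order for the base station (or the cooperating neighbors) to form the Has/Wants/Lacks sets, and those indices are precisely $I_j$. Hence the support of $\mathbf{A}_j$ leaks for free and only the $r$ field values remain secret. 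You should add this observation before invoking the $(q-1)^r$ bound.

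On the other hand, your discussion of how a candidate guess is \emph{verified}---using the assumed prior on $x_j$ and invoking $q>2$ to keep the false-positive rate negligible---goes beyond what the paper does; the paper simply declares $M=1$ without addressing uniqueness at all. So on that front your argument is more careful than the original, even if (as you yourself flag) it is heuristic rather than rigorous.
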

\begin{proof}
The problem is equivalent
to the following problem. Suppose that there are $M$ keys out of $N$ keys that can open a locked
door. It can been shown (using some combinatorics) that
on average $\frac{N+1}{M+1}$ trials are required to find the first correct
key without replacement of the keys. In our problem, there is only one correct key (decoding coefficient) for each client. Therefore, $M = 1$. On order to find $N$, we proceed as follows.

As will be seen in the next two sections, in order for the clients
to obtain their missing packets for decoding, they need to reveal
the location of non-zero decoding coefficients either to the base
station or to other clients using public broadcast channels. This
will help to identify the \emph{Has}, \emph{Lacks} and \emph{Wants}
set of each client for devising an efficient network coded
retransmission phase \textbf{(Definition of the mentioned sets will
be provided in Section \ref{sec:BaseStation})}. Each non-zero entry
in $\mathbf{A}_i$ can take $q-1$ different values from the finite
field $\mathbb{F}_q$. Therefore, $c_i$ needs to make guesses over a
space of $(q-1)^r$ different possibilities. However, if $(q-1)!n!<
(q-1)^r$ , then the attacker would prefer to make guesses over the
permutation functions $\Pi_{\tilde{N}}^i$ and $\Pi_Q^i$ with $n!$
and $(q-1)!$ possibilities each. Therefore, the total number of
possible keys is $\min\{(q-1)^r,(q-1)!n!\}$ and the theorem is
proved.
\end{proof}

\textbf{Moreover, in an information theoretic sense}, two types of
security can be considered. To be unconditionally secure
\cite{ShannonSecrecy, SecureNetCod, SecrecyLaszlo}, the length of
the key should be equal to the message. In that case, since the
decoding coefficients act as key, unconditional security implies
that after each round of transmitting single- element messages (a
message with one element from the finite field $\mathbb{F}_q$), a
new set of decoding coefficients should be generated and broadcast
to all the clients. If the unconditional security is relaxed to a
weaker definition of security similar to \cite{WeaklySecure}, then
it is possible to transmit longer messages (more than one element of
$\mathbb{F}_q$) as the eavesdropper cannot obtain any meaningful
information about the individual messages, although some information
about the joint distribution of the messages might be leaked.
\textbf{This notion has been partially discussed in \cite{ITW2012},
where it is proven that our suggested scheme is weakly secure in
general and is unconditionally secure if the decoding coefficients
are refreshed after each round of transmission both for a finite
field of size $q=2$, i.e. $\mathbb{F}_2$.}


\section{Centralized Packet Recovery via the Base Station} \label{sec:BaseStation}
 The downlink channel between the base station and
each client is modeled as a packet erasure link. After the initial
broadcast phase each client $c_i$ might have missed each packet with
probability $p$. From the set of missing packets by client $c_i$
those packets which are in the set $R_i$ should be retransmitted
either by the base station or a neighboring client. In this section
we analyze retransmissions by the base station. We use a matrix $S$
to denote the reception status of each packet for all the clients.
Each entry $s_{ij}$ in matrix $S$ denotes the reception status of
packet $p_j$ for client $c_i$ which can take one of the following
three values:
\begin{itemize}
\item $s_{ij}=1$ if client $c_i$ initially holds packet $p_j$. The
set of packets initially received by $c_i$ is represented by
$\Gamma_i\subseteq P$ (\emph{Has} set).

\item $s_{ij}=2$ if client $c_i$ initially does not hold packet
$p_j$ but needs that packet to recover its own message. The set of
such elements for client $c_i$ is denoted by
$\Omega_i=\bar{\Gamma}_i \cap R_i$ (\emph{Wants} set).
\item $s_{ij}=3$ if client $c_i$ neither holds the packet $p_j$ nor
needs it to recover its own message. These elements are shown by
$\Psi_i=\bar{\Gamma}_i \cap \bar{R}_i$.
\end{itemize}
\textbf{The \emph{Lacks} set is the set of packets which have not
been received by client $c_i$, i.e. the set
$\bar{\Gamma_i}=\Omega_i\cup \Psi_i$. It should be noted that the
mentioned sets are updated during the packet recovery phase, however
for simplicity, we do not use an index of time for these sets.}

We assume each client sends an ACK or NACK feedback to the base
station in an error-free phase to acknowledge reception of each
packet. Therefore the base station can form the matrix $S$.

 This problem is an instance of index coding problem where each
client $c_i$ holds a subset of packets denoted by $\Gamma_i\in P$
and is interested in receiving another subset $\Omega_i \in P$ where
$\Gamma_i\cap \Omega_i=\emptyset$, but $\Gamma_i\cup \Omega_i$ is
not necessarily equal to $P$. Finding an optimal transmission
solution for index coding in its general form has been proven to be
NP-hard \cite{IndexCodMain,IndexCodingNP}. In
\cite{IndexCodingHeuristics} a heuristic algorithm is proposed to
find a possibly sub-optimal solution where at each transmission an
instantly decodable packet is generated. An instantly decodable
packet is a linear combination of a subset of packets which is
immediately decodable by some of the clients and is discarded by the
others. In other words, using our notations in this paper, if client
$c_i$ initially holds a subset $\Gamma_i\subset P$ and receives a
linear combination of the packets in $\Gamma_i\cup \{p_{\ell}\in
\bar{\Gamma}_i\}$ for some $1\leq \ell \leq n$, it would be able to
instantly decode and detach $p_{\ell}$. Moreover, \cite{SamehDelay}
suggests an improved version of the algorithm in
\cite{IndexCodingHeuristics} to minimize the broadcast completion
delay, where the clients with larger number of packet demands are
given priority (using a weighting mechanism) while the instantly
decodable packets are generated. We apply \textbf{the same}
heuristic as in \cite{SamehDelay} to minimize the total number of
transmissions. \textbf{However, we use a different analytical
approach to approximate the number of required transmissions using
the mentioned heuristic.}
\subsection{Algorithm}
At first, the index coding problem is converted to an equivalent
graph $G(V,E)$. Each packet $p_j\in \Omega_i$ is mapped to a
distinct vertex denoted by $v_{ij}$, in other words, there exists a
vertex for each packet in the \emph{Wants} sets of each client. Two
vertices $v_{ij}$ and $v_{kl}$ are connected by an edge if one of
the following conditions holds:
\begin{itemize}
\item Vertices $v_{ij}$ and $v_{kj'}$ are connected if $j=j'$. In
other words if clients $c_i$ and $c_k$ are seeking the same packet
$p_j$.
\item Vertices $v_{ij}$ and $v_{kl}$ are connected if $p_j\in
\Gamma_k$ and $p_l \in \Gamma_i$. In other words, each client holds
the demanded packet by the other one.
\end{itemize}
A subset $\hat{V}=\{v_{i_1j_1},\dots,v_{i_mj_m}\} \subseteq V$ for
some $m\leq n$ forms a clique if any two vertices in $\hat{V}$ are
connected by an edge in $G$. Such a clique is equivalent to a subset
of packets denoted by $\mathcal{P}_V \subseteq P$. XORed version of
the packets in this subset, i.e. $\bigoplus_{p_u\in \mathcal{P}_V}
p_u$ is instantly decodable by the target clients.

We also denote the neighborhood of a vertex $v_{ij}$ as the set of
vertices which are connected to $v_{ij}$ by an edge and we denote it
by $\mathcal{N}_{ij}=\{v_{ts}:(v_{ij},v_{ts})\in E\}$. We define the
weight of each vertex $v_{ij}$ as follows:
\begin{equation} \label{eq:weight}
w_{ij}=|\Omega_i|\sum_{v_{ts}\in \mathcal{N}_{ij}}|\Omega_t|
\end{equation}
The algorithm for generating instantly decodable packets to transmit
to all the clients by the base station is given in Algorithm.
\ref{algorithm1}.
\begin{algorithm}[h]
\caption{Packet Recovery via Base Station}
 \label{algorithm1}
 \textbf{while} $V\neq \emptyset$ \tab \%\emph{The set of vertices}\\
   $V_{temp}=V$\\
   $\hat{V}=\emptyset$\\
   Calculate $w_{ij}$ 's for $G$. \tab \%\emph{According to equation \eqref{eq:weight}}\\
   \tab \textbf{while} $V_{temp}\neq \emptyset$ \tab  \% \emph{Nodes with higher weights added to the clique}\\
    \tab  $v^*=\mathrm{argmax}_{w_{ij}}\{V_{temp}\}$\\
    \tab $V_{temp}\leftarrow V_{temp}\setminus v^*$\\
    \tab \tab  \textbf{If} $\hat{V}\cup v^*$ is a clique\\
      \tab \tab   $\hat{V}\leftarrow \hat{V}\cup v^*$\\
    \tab \tab     \textbf{End if}\\
      \tab   \textbf{End while}\\
         $P_{\hat{V}}=\{p_j: v_{ij}\in \hat{V}\}$\\
         Broadcast the packet $p_B=\bigoplus_{p_g\in P_{\hat{V}}} p_g$ to all the clients.\\
         \textbf{For} $i=1:n$ \tab \% \emph{Matrix $S$ is updated for some clients} \\
          \tab  \textbf{If} client $c_i$ received the packet $p_B$\\
            \tab \tab    \textbf{If} $v_{ij}\in \hat{V}$\\
             \tab   $s_{ij}=1$, $\Gamma_i \leftarrow\Gamma_i\cup \{p_j\}$, $\Omega_i \leftarrow\Omega_i\setminus \{p_j\}$, $V\leftarrow V\setminus
             \{v_{ij}\}$\\
             \tab \tab   \textbf{End if}\\
             \tab   \textbf{End if}\\
                \textbf{End for}\\
\textbf{End while}
\end{algorithm}
In Alg. \ref{algorithm1}, at each round of the algorithm the
vertices are ordered according to their weights. The vertex with
maximum weight is considered as the first one in a list of vertices.
The other vertices are added to the list if they form a clique with
the \textbf{exiting} ones in the list. Therefore, at each round of
the algorithm, a list of vertices is generated which is equivalent
to an instantaneously decodable packet for the clients in the clique
and is broadcast to all clients. Each receiver might receive the
retransmitted packet with probability $1-p$ (We assume the
probability of erasure remains constant during broadcast and
recovery phases). The sets $\Gamma_i$, $\Omega_i$ and matrix $S$ are
updated for those target \footnote{Here, by the target user we mean
those users which are able to obtain one packet in their
\emph{Wants} set $\Omega_i$. Later in this section, we will show
that if any user which is able to decode and obtain a packet within
its \emph{Lacks} set is targeted the throughput would be higher,
however the energy consumption of the users increases as they have
to keep their radios switched on for more transmissions} clients
that received the retransmitted packet packet without error. Those
vertices for which their equivalent missing packet is received by
the corresponding client are removed from $G(V,E)$.
\subsection{Analysis of throughput}
The algorithm discussed in this section tries to find the clique
with maximum size at each round of transmission (Although this can
not be guaranteed and the algorithm may find suboptimal size
cliques). Graph $G(V,E)$ in our problem, \textbf{can be modeled as a
random graph by nature}.
 A random graph $\mathcal{G}(N,\pi)$ is used
to model and represent the actual graph $G(V,E)$ and is defined with
two parameters: the number of vertices (denoted by $N$) and the
probability that there exists an edge between two specific vertices
(denoted by $\pi$). \textbf{Parameters $N$ and $\pi$ in our problem
will be identified soon in this section.}

Each client is interested in receiving a specific set of $r$
packets. Since the probability of erasure is assumed to be constant
for all the transmissions, the number of missing packets by each
client is a Bernoulli random variable. Therefore, the average number
of demanded packets by each client is $rp$. Consequently, the
expected number of vertices in $\mathcal{G}$ is $N=nrp$. We use this
average instead of the actual number of vertices which is a random
variable (summation of $n$ binomial random variables) which
simplifies our analysis but at the price of losing some precision.

If the retransmissions were error-free, the problem could be modeled
as a clique partitioning problem \cite{IndexCodingHeuristics}.
However, since the reception of retransmitted packets at all the
target receivers cannot be guaranteed, one cannot assure that the
entire set of vertices in the corresponding cliques are removed. As
a consequence, the problem would be different from the standard
clique partitioning problem. The following theorem provides an
approximation for the total number of required transmissions
incorporated in the recovery phase.


In the following, we provide an approximate assessment for the total
number of required transmissions for large graphs (as $N\rightarrow
\infty$) .
\begin{theorem}
\label{thm1}
 The total number of required transmissions $T$ is
approximated by:
\begin{equation*}
\begin{split}
T&=\max{t}\\
\mathrm{s.t. }\\
N_{t+1}&=N_t-2(1-p)\frac{\log N_t}{\log\frac{1}{\pi}}\\
\hat{p}(t+1)&=\hat{p}(t)-\frac{N_{t+1}-N_t}{nr}\\
N_t&>0\\
N_0=N,
\pi&=\frac{n-1}{n}[(1-\hat{p}(t))^2+\hat{p}(t)^2],\hat{p}(0)=p
\end{split}
\end{equation*}
\end{theorem}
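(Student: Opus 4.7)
My plan is to establish the iteration by (i)~modeling the side-information graph $G(V,E)$ maintained by Algorithm~\ref{algorithm1} at each retransmission round $t$ as an Erd\H{o}s--R\'enyi random graph $\mathcal{G}(N_t,\pi_t)$, (ii)~using the classical asymptotic for the clique number of $\mathcal{G}(N,\pi)$ to quantify the expected size of the clique produced by the heuristic, and (iii)~propagating the resulting changes through the two bookkeeping parameters $N_t$ and $\hat{p}(t)$.

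First I would identify the parameters of the random graph at round $t$. A vertex $v_{ij}$ exists exactly when $p_j\in\Omega_i$; treating the residual reception pattern as i.i.d.\ Bernoulli with ``still missing'' probability $\hat{p}(t)$ and the Wants index sets as uniformly random of size $r$, this event has probability $\hat{p}(t)\,r/n$. Summing over the $n^2$ ordered pairs $(i,j)$ gives the expected vertex count $N_t=nr\hat{p}(t)$, which correctly initializes to $N_0=nrp$. For the edge probability I would do a case analysis on the two adjacency rules of Section~\ref{sec:BaseStation}: two vertices of the same client are never adjacent, giving the prefactor $(n-1)/n$; among vertices of two distinct clients, the ``mutual side information'' rule ($p_j\in\Gamma_k$ and $p_l\in\Gamma_i$) contributes $(1-\hat{p}(t))^2$, while the ``same missing packet'' rule ($j=l$) contributes $\hat{p}(t)^2$, the two events being treated as approximately disjoint, yielding the $\pi$ stated in the theorem.

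Next I would invoke the classical concentration result of Matula and of Bollob\'as--Erd\H{o}s: as $N\to\infty$, $\omega(\mathcal{G}(N,\pi))$ concentrates on $2\log N/\log(1/\pi)$. Treating Algorithm~\ref{algorithm1}'s output as being of this expected size, and then factoring in the independent erasures on the broadcast link (each target vertex in the clique is actually removed only with probability $1-p$), the expected number of vertices lost in round $t$ is $2(1-p)\log N_t/\log(1/\pi_t)$, giving the recurrence $N_{t+1}=N_t-2(1-p)\log N_t/\log(1/\pi_t)$. Because the identity $N_t=nr\hat{p}(t)$ is preserved by construction, the decrement in $N$ induces directly the stated update of $\hat{p}$, and substituting $\hat{p}(t+1)$ back into the Step~1 formula yields $\pi_{t+1}$. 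The procedure terminates once no vertices remain, so $T=\max\{t:N_t>0\}$.

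The main obstacle lies in Step~2. The edges of $G(V,E)$ are not strictly independent, being coupled through the shared Has/Wants sets, so the Erd\H{o}s--R\'enyi reduction is a mean-field approximation that would need to be justified by concentration of the per-client reception counts as $n,N\to\infty$. More substantively, Algorithm~\ref{algorithm1} is a \emph{weighted} greedy clique-growing procedure, and it is classical that the naive greedy algorithm on $\mathcal{G}(N,\pi)$ finds cliques of size roughly $\log N/\log(1/\pi)$, a factor of two below $\omega$; the theorem's recurrence implicitly assumes that the weighting brings the heuristic close to the true maximum, so a fully rigorous proof would either refine this constant or justify the approximation empirically --- which is precisely what the simulations reported later in the paper are designed to do. Granted these approximations, the remainder of the argument is a mechanical expectation-propagation calculation that reproduces exactly the iterative expression in the statement.
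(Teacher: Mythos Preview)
Your proposal is correct and follows essentially the same approach as the paper: model $G(V,E)$ as an Erd\H{o}s--R\'enyi graph, invoke the Bollob\'as--Erd\H{o}s clique-number asymptotic $2\log N/\log(1/\pi)$, thin the clique by the factor $1-p$ for retransmission erasures, and propagate the decrement to $\hat{p}$ via $N_t=nr\hat{p}(t)$. Your derivation of $\pi$ via the two adjacency rules and the $(n-1)/n$ prefactor coincides with the paper's computation. Your added caveats --- that the edges are not truly independent and that an unweighted greedy would typically achieve only $\log N/\log(1/\pi)$ rather than $2\log N/\log(1/\pi)$ --- are more explicit than the paper, which simply notes that ``maximality cannot be guaranteed'' and defers validation to the simulations; but the core argument is the same.
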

\textbf{where $N_t$ is the size of the remaining number of vertices
found by the algorithm at the end of $t$'th transmission and
$\hat{p}(t)$ approximates the probability of missing a packet (or
equivalently the probability of a packet to be in the lacks set of a
client) at round $t$, where $\hat{p}(0)$ is set to be $p$ before the
algorithm starts.}
\begin{proof}
At each round, the algorithm searches to find the largest clique for
removal (However maximality cannot be guaranteed as the algorithm is
a heuristic). In 1976, Bollabas and Erdos proved
\cite{RandomGraphClq} that the size of the largest clique in a large
random graph $\mathcal{G}(N,\pi)$ is as follows (as $N\rightarrow
\infty$):
\begin{equation}
\mathcal{X}_N=\frac{2\log N}{\log\frac{1}{\pi}}
\end{equation}
At each transmission, a combination of $p_i$'s is generated based on
the clique found by the algorithms and is broadcast to all the
clients. However each client receives a packet with probability
$1-p$.Therefore, at $t$'th transmission, $(1-p)\mathcal{X}_{N_t}$
nodes are removed. The algorithm continues until all the nodes are
removed from $\mathcal{G}$. This results in the specified recursive
relation between $N_t$ and $N_{t+1}$ after $t$'th transmission.
Therefore $T=\max{t}$ indicates the number of transmissions
incorporated.

 The probability $\pi$ that two nodes are connected
together is obtained as follows:

\begin{equation}
\begin{split}
\pi=P((v_{ij},v_{kl})\in E)=P[(p_j\in \Gamma_k) , (p_l\in \Gamma_i),
(i\neq k)]
 + P[((p_j=p_l)\notin(\Gamma_i\cup \Gamma_k),
(i\neq k)]\\
=P(i\neq k)[P((p_j\in \Gamma_k)P(p_l\in \Gamma_i)
+P(p_j\notin\Gamma_i)P(p_j\notin\Gamma_k)]=
\frac{n-1}{n}[(1-p)^2+p^2]
\end{split}
\end{equation}
In other words an edge exists between two nodes in $\mathcal{G}$ if
either they demand the same packet or if each of them holds the
packet demanded by the other one.

\textbf{At each round, the number of packets which are made known to
the clients is $N_{t+1}-N_t$ out of the total number of $nr$ packet
receptions required for all clients to decode. We can claim that
this number is virtually proportional to $\frac{n}{r}(N_{t+1}-N_t)$
out of $n^2$ packets considering the total number of possible packet
receptions at all clients ($n$ packets for $n$ clients). Therefore,
to update the probability of missing a packet, one might roughly
expect that
$\frac{\frac{n}{r}(N_{t+1}-N_t)}{n^2}=\frac{N_{t+1}-N_t}{nr}$ should
be subtracted from $p(t)$ at round $t+1$ resulting in
$\hat{p}(t+1)$. $\pi$ is updated at accordingly at each round $t$. }
\end{proof}
In the next subsection, the effect of parameters $r$ and $p$ on the
\textbf{total number of transmissions} of the system is examined via
simulations. Moreover, theorem \ref{thm1} is verified via
simulations.
\subsection{Numerical Experiments}
In our numerical experiments we follow two main goals:
\begin{itemize}
\item The effect of the erasure probability $p$ and the coding rate
$r$ on the performance of the system (total number of required
transmissions) is evaluated. It is expected that the observed
quality of service by each client would be proportional to the
performance of the entire system as the sets $R_i$ are identified
randomly (\textbf{$R_i$ is the set of packets client $c_i$ needs to
receive to be able to obtain its own message}). Larger values of $r$
leads to more required transmissions as each client might need to
receiver more packets to decode and consequently more packets are
likely to be missed by each client. However, there is a trade-off
between the coding rate $r$ and secrecy against computational
(brute-force attacks). Therefore, it is important to find a
reasonable range for $r$ considering the computational capacity of
an eavesdropping client (See section \ref{sec:Security}).
\item Theorem. \ref{thm1} is verified by numerical experiments. As
mentioned in the proof of theorem. \ref{thm1}, the results are more
precise for larger number of nodes in the random graph
$\mathcal{G}$. Therefore, an approximation error is observed between
simulation results and theory.
\end{itemize}
We have run $1000$ rounds of simulation for a set of $n=20$ clients.
We have considered five different values of erasure probability:
$p=0.1$, $p=0.2$, $p=0.3$, $p=0.4$ and $p=0.5$. Operations are
assumed to be performed in a sufficiently large field size. We
further assume that the probability of erasure is constant during
the initial broadcast phase and packet recovery and is identical for
all the clients and the feedbacks from each client is correctly
received by the base station. Fig. \ref{fig:myfig1} shows the
average ratio of the total number of successfully decoded packets to
total number of transmitted packets. Since we have assumed that each
client is interested in one distinct message, the number of packets
that should be successfully decoded at the end is $n$. Therefore, if
we denote the number of packets transmitted during the packet
recovery phase by $T$, Fig. \ref{fig:myfig1} shows $\frac{n}{n+T}$
for different values of $r$ for the corresponding values of erasure
probability $p$.
\begin{figure}[t]
\begin{center}
  \includegraphics [scale=0.5]{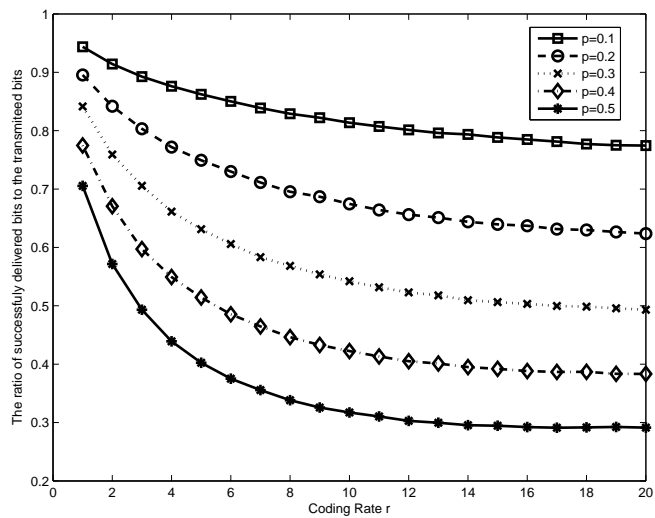}
  \caption{Total number of transmissions for different values of $r$.}\vspace{-1em} \label{fig:myfig1}
\end{center}
\end{figure}
Fig. \ref{fig:myfig2} compares the result of simulations with the
result of theorem \ref{thm1}. We have run $1000$ rounds of
experiments for $n=100$ clients and $r=60$. In each experiment, the
number of removed nodes have been recorded and the average value of
these numbers are shown in the figure. As discussed earlier, because
of the erasure over the downlink channel between the base station
and the clients, all the nodes within a clique found by the
algorithm are not guaranteed to be removed. Also, as mentioned
earlier, there is an understandable gap between simulations and
theory as the theorem applies to large number of nodes
($N\rightarrow\infty$).
\begin{figure}[t]
\begin{center}
  \includegraphics [scale=0.5]{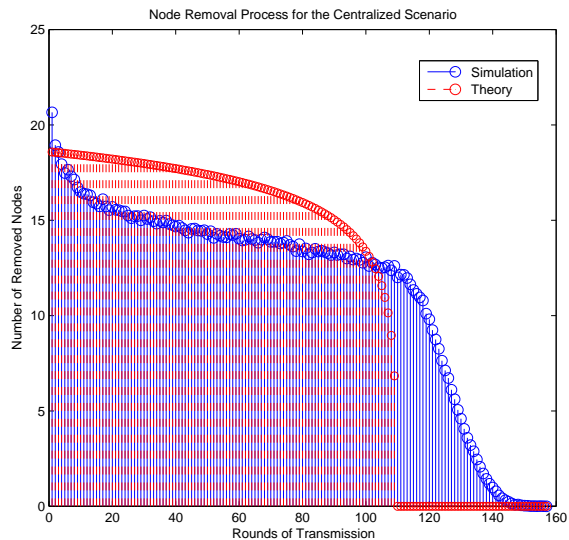}
  \caption{Theory vs. simulations for the packet recovery via base station.}\vspace{-1em} \label{fig:myfig2}
\end{center}
\end{figure}
 Another interesting
observation also has been made to examine a modified version of
algorithm \ref{algorithm1}. In the basic version, the algorithm only
updates those receiver who have been \emph{target recipients} of the
transmitted packet at each round, i.e. those $c_i$'s which are able
to decode a packet in their $\Omega_i$'s. However, in the modified
version, the algorithm updates all the clients which are able to
instantly decode and obtain a packet regardless of if that belongs
to their demanded packets or not (the decoded packet might belong to
$\Psi_i$ or $\Omega_i$). This policy helps each client $c_i$ to
extend its \emph{has} set faster than the current version as more
packets are decoded and buffered by each client. This will obviously
increase the chance of coding more packets together at each round of
transmission (to produce an instantly decodable packet). However, it
should be noted that the average energy consumed by the receiver
device at each client is different for the basic and modified
versions. We assume that each client is informed of which packets
are coded together priori at the beginning of each transmission
round. Therefore, each client can turn off its radio if the packet
is not intended for that client. Hence, in the modified version each
client should listen to those packets  (and consume energy) which it
can decode, but in the current version each client $c_i$ only
listens to those packets which it can decode and are in its set
$\Omega_i$. \textbf{Throughput performance of the modified version
is difficult to be analyzed theoretically, however we have evaluated
its throughput and energy consumption via simulations}. We have run
$1000$ rounds of experiments for both versions where $p=0.3$ and $r$
varies from $r=1$ to $r=20$ for $n=20$ clients. Fig.
\ref{fig:myfigNew1} shows the average total number of transmissions
and Fig. \ref{fig:myfigNew2} compares the two versions for the
average number of packet receptions by each client, i.e. the number
of transmission rounds that a client should turn on its radio to
receive a packet. As it can be inferred from Fig.
\ref{fig:myfigNew1} and Fig. \ref{fig:myfigNew2}, the modified
version performs better in terms of total number of transmissions,
however this would be at the price of more energy consumption by the
receivers.
\begin{figure}[h]
\begin{center}
  \includegraphics [scale=0.5]{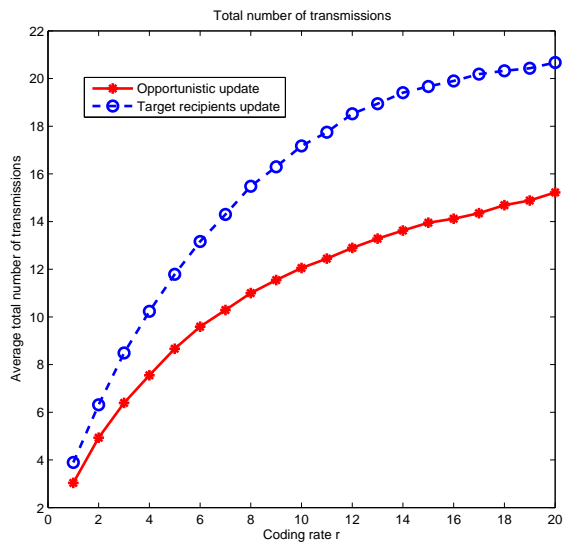}
  \caption{Total number of transmissions for the current and modified version of the algorithm}\vspace{-1em} \label{fig:myfigNew1}
\end{center}
\end{figure}
\begin{figure}[h]
\begin{center}
  \includegraphics [scale=0.5]{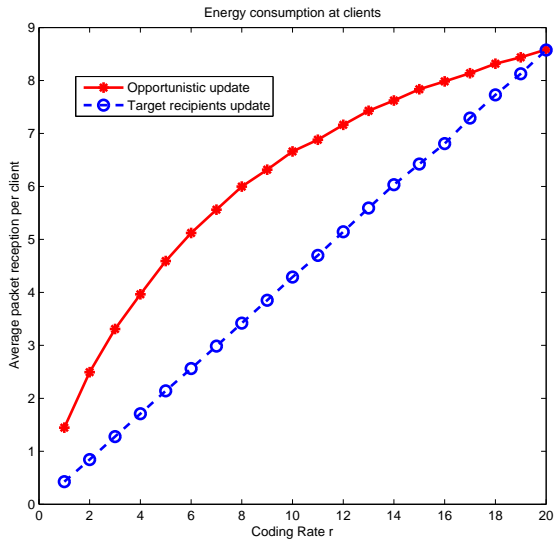}
  \caption{Average number of packet receptions by a client for the current and modified version of the algorithm}\vspace{-1em} \label{fig:myfigNew2}
\end{center}
\end{figure}

\section{Decentralized Packet Recovery via Cooperation} \label{sec:Cooperation}
 In Section \ref{sec:Broadcast}, we assumed
 that the base station is in charge for packet recovery
(retransmissions). In this section, we consider a different
possibility: clients can be separated from the base station and work
together to obtain their required packets. Short range links among
the clients are fast, cheap and reliable. More importantly, a
considerable amount of base station bandwidth which was supposed to
be used for retransmissions is released and can be used for other
purposes.

We assume the clients can be divided into clusters such that, each
client is able to send packets to the other clients within the same
cluster and each client receives a packet transmitted by a neighbor
with a probability $p'$ (which is different from the erasure
probability $p$ over the downlink). We denote the $m$-th cluster in
the network with $C^m\subseteq C$. We assume each client can belong
to only one cluster.

The broadcast and key sharing phases are identical to centralized
scenario. Also the matrix $S$ has a similar definition. However, in
the cooperative scenario each client should inform the other clients
in the same cluster about the identity of packets it has received
and packets it is looking for. If we denote the $i$-th row of matrix
$S$ with $S_i$, then each client $c_i\in C^m$, should send $S_i$ to
its neighbors in $C^m$. We assume these transmissions are reliable,
so each client is aware of the what each neighbor has received and
which packets it needs to be able decode its message.

\subsection{Algorithm}
The problem discussed in this section can be called
\emph{cooperative index coding problem}. The problem is formally
defined as follows:
\begin{mydef}[Cooperative Index Coding Problem]
We consider the group of clients (called as a cluster) $C^m\subseteq
C$. Each client $c_i\in C^m$ has received a subset $\Gamma_i\in P$
and requires a subset $\Omega_i\in P$. We further assume
$\bigcup_{\forall i: c_i\in C^m} \Gamma_i=P$, i.e. the clients in
$C^m$ hold the entire set of packets collectively. The clients
cooperate with each other by exchanging functions of their packets
to help each $c_i$ to obtain $\Omega_i$. The minimum number of
transmissions is desired.
\end{mydef}

Since each client has only received a subset of packets, it would be
only able to generate combinations of packets within its own set
$\Gamma_i$. \textbf{This leads us to the idea of generating local
graphs for all the clients, where each local graph is generated for
the same set of vertices. However, only those edges which their both
end vertices are mapped from packets within the has set of a client
are included in its local graph. Therefore, removing a clique from a
local graph corresponding to a specific client $c_u$ is translated
to generating a coded packet by client $c_u$ which is instantly
decodable by others. In the following the process of generating
local graphs is formally described. The main graph $G(V,E)$ is
generated as described in Section \ref{sec:BaseStation}.}


For each client $c_u$ a corresponding local graph is represented by
$G_u(V,E_u)$. Each packet $p_j\in \Omega_i$ is mapped to a vertex
$v_{ij}$. The set of all vertices is denoted by $V$ and is identical
for all the local graphs. Two vertices are connected by an edge in
$G_u$ if one of the following conditions hold:
\begin{itemize}
\item There exist an edge $e_{ijkj'}^u \in E_u$ between vertices $v_{ij}$ and
$v_{kj'}$ if $j'=j$ and $i\neq k$ in the main graph $G(V,E)$ and
$p_j\in \Gamma_u$.
\item There exist an edge $e_{ijkl}^u \in E_u$ between vertices $v_{ij}$
and $v_{kl}$ if $p_j\in \Gamma_k$ and $p_l\in \Gamma_i$ for $i\neq
k$ and also $p_j, p_l\in \Gamma_u$.
\end{itemize}
In other words, $E_u\subseteq E, \forall u\in \{1,\dots,n\}$. A
subset $\hat{V}^u = \{v_{i1j1},\dots,v_{i\ell j\ell}\}$ for some
$\ell\leq n$ forms a clique in a local graph $G_u$ if any two
vertices $\hat{V}^u$ are connected by an edge in $G_u$. In that case
client $c_u$ is able to generate an instantaneously decodable packet
for the clients $c_{i1}, \dots,c_{i\ell}$.

Each local graph is weighted using the same criteria in Section
\ref{sec:Broadcast}. The neighborhood of a vertex $v_{ij}$ in graph
$G_u$ is denoted by $N_{ij}^u=\{v_{ts}:(v_{ij}, v_{ts})\in E_u\}$.
We define the weight of each vertex $v_{ij}$ as follows:
\begin{equation}\label{eq:weightccop}
w_{ij}^u=|\Omega_i|\sum_{v_{ts}\in \mathcal{N}_{ij}^u}|\Omega_t|
\end{equation}
Alg. \ref{Alg2} provides a heuristic to generate an instantly
decodable packet by one of the clients at each round of
transmission. The algorithm attempts to remove as many nodes as
possible at each round. Alg. \ref{Alg2} is a modified version of
Alg. \ref{algorithm1} suitable for the cooperative scenario. At each
round, the algorithm selects the largest clique among the maximal
cliques found by the clients (maximum of the maximums) for
transmission where each client applies one round of Alg.
\ref{algorithm1} to its local graph. For simplicity, we have used
indices $i$,$j$ and matrix $S$ locally for an arbitrary cluster
$C^m\subseteq C$.
\begin{algorithm}[h]
\caption{Packet Recovery via Cooperation} \label{Alg2}
 \textbf{while} $V\neq \emptyset$\\
    $M=[]$\\
    $V_{temp}=V$\\
   $\hat{V}^u=\emptyset$, $\forall u=1,\dots,|C^m|$\\
   Calculate $w_{ij}$ 's for $G$ and $G_u$ for $u=1,\dots,|C^m|$ \tab \%\emph{According to Eq. \eqref{eq:weight} and Eq. \eqref{eq:weightccop}, respectively.}\\
   \textbf{For} $u=1:|C^m|$\\
   \tab \textbf{while} $V_{temp}\neq \emptyset$\\
    \tab  $v^*=\mathrm{argmax}_{w_{ij}}\{V_{temp}\}$\\
    \tab $V_{temp}\leftarrow V_{temp}\setminus v^*$\\
    \tab \tab  \textbf{If} $\hat{V}\cup v^*$ is a clique\\
      \tab \tab   $\hat{V}\leftarrow \hat{V}\cup v^*$\\
    \tab \tab     \textbf{End if}\\
      \tab   \textbf{End while}\\
      $M(u)=|(\hat{V}^u)|$\\
      \textbf{End for}\\
      $u_0=argmax_{u} M$\\
      $\hat{V}=\hat{V}^{u_0}$
         $P_{\hat{V}}=\{p_j: v_{ij}\in \hat{V}\}$\\
         Broadcast the packet $p_C=\bigoplus_{p_g\in P_{\hat{V}}} p_g$ to all the clients.\\
         \textbf{For} $i=1:|C^m|$\\
          \tab  \textbf{If} client $c_i$ received the packet $p_C$\\
            \tab \tab    \textbf{If} $v_{ij}\in \hat{V}$\\
             \tab   $s_{ij}=1$, $\Gamma_i \leftarrow\Gamma_i\cup \{p_j\}$,$\Omega_i \leftarrow\Omega_i\setminus \{p_j\}$, $V\leftarrow V\setminus
             \{v_{ij}\}$\\
             \tab \tab   \textbf{End if}\\
             \tab   \textbf{End if}\\
                \textbf{End for}\\
\textbf{End while}
\end{algorithm}

\subsection{Analysis of throughput}
The main issues which differentiate the cooperative (decentralized)
scenario from the centralized scenario can be explained as follows:
\begin{itemize}
\item There is not a central entity which holds the entire set of
packets. Each client has received a subset. If $\Gamma_i \cup
\Omega_i=P$ (or $r=n$) for all $i=1,\dots,n$, the problem reduces to
the cooperative data exchange problem discussed in \cite{ITW2010,
ISIT2010, NetCod2011} which has a linear programming solution if the
packets are allowed to split into smaller fractions. However, in its
general form, as discussed earlier, it can be explained as
cooperative index coding problem.
\item Since we assumed that for each packet $p_j$ there exist at least one
client $c_i$ that $p_j\in \Gamma_i$, if in the broadcast phase a
packet is not received by any of the clients, it should be
retransmitted once again by the base station which affects the
initial distribution of packets in a statistical sense (in the
second or later transmissions of packet $p_j$ there is a chance that
more clients receive it).
\item The probability $p'$ that a packet which is transmitted by a client
$c_i$, is received by $c_k$, is different from the probability of
erasure over the downlink between the base station and each client.
\end{itemize}
As the first step, the probability that the client $c_i$ would have
received a specific packet $p_j$ should be clarified (which is
denoted by $p_{eff}$. The base station stops transmitting a specific
packet $p_j$ if it would have been received by at least one of the
clients. We assume the base station receives perfect feedback from
the clients on the reception status of each packet. The probability
that none of the clients have not received packet $p_j$ after $t$
transmissions by the base station, and client would have received
packet $p_j$ at $(t+1)$'th transmission in a cluster of size $|C^m|$
is $(1-p)(p^{|C^m|})^t$. Theoretically $t$ can be infinite,
therefore:
\begin{equation}\label{myequa}
\begin{split}
1-p_{eff}&=\\
(1-p)+(1-p)(p^{|C^m|})+(1-p)(p^{|C^m|})^2&+\dots
\\
=(1-p)\sum_{t=0}^\infty (p^{|C^m|})^t&=\frac{1-p}{1-p^{|C^m|}}
\end{split}
\end{equation}
Secondly, each local graph is modeled as a random graph
$\mathcal{G}_u(N,\pi_u)$. Each node is expected to miss $rp_{eff}$
packets, therefore the average total number of missing (equivalent
to the average number of nodes in $\mathcal{G}$ and
$\mathcal{G}_u$'s), is $N=nrp_{eff}$. The total number of required
transmissions for the cooperative scenario is approximated by the
following theorem.
\begin{theorem}
\label{thm2}
 The total number of required transmissions for the
cooperative scenario $T_c$ is approximated by:
\begin{equation}
\begin{split}
T_c&=\mathrm{max}~t\\
s.t.\\
N_{t+1}&=N_t-2(1-p')\frac{\log N_t}{\log(\frac{1}{\pi_u})}\\
N_t&>0\\
\hat{p_{eff}}(t+1)&=\hat{p_{eff}}(t)-\frac{N_{t+1}-N_t}{nr}\\
N_0=N,
\pi_u&=(\frac{(|C^m|-2)(|C^m|-1)}{|C^m|^2}[(1-\hat{p}_{eff}(t))^4+(1-\hat{p}_{eff}(t))\hat{p}_{eff}^2]
\end{split}
\end{equation}
\end{theorem}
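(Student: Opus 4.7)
The plan is to mirror the argument of Theorem~\ref{thm1}, adapting each ingredient to reflect the three structural changes peculiar to the cooperative setting: (i) the per-client effective erasure probability $p_{\mathrm{eff}}$ replaces $p$ when describing the initial packet distribution, (ii) the peer-to-peer erasure probability $p'$ replaces $p$ in the per-round node-removal rate, and (iii) the connection probability of the random graph must be computed for a \emph{local} graph $\mathcal{G}_u$, where an edge requires the transmitter $c_u$ to hold the relevant packets in addition to the usual pairwise condition. With these substitutions in place, the Bollob\'as--Erd\H{o}s asymptotic $\mathcal{X}_N=2\log N/\log(1/\pi)$ drives the same recursion as before.

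First I would carry over equation~\eqref{myequa} to justify that modelling the initial number of unknown vertices as $N=nrp_{\mathrm{eff}}$ is appropriate, since the base station retransmits until at least one cluster member receives each packet. Next I would derive $\pi_u$ directly from the two edge conditions in the definition of $G_u(V,E_u)$. Conditioning on three distinct client indices $i,k,u$ (which occur with probability $(|C^m|-1)(|C^m|-2)/|C^m|^2$), the first condition $p_j\in\Gamma_k,\ p_l\in\Gamma_i,\ p_j,p_l\in\Gamma_u$ requires four independent successful initial receptions, contributing $(1-\hat{p}_{\mathrm{eff}}(t))^4$; the second condition, in which $j=l$, needs $p_j\in\Gamma_u$ but $p_j\notin\Gamma_i,\Gamma_k$, contributing $(1-\hat{p}_{\mathrm{eff}}(t))\hat{p}_{\mathrm{eff}}(t)^2$. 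Summing the two disjoint cases reproduces the stated $\pi_u$.

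Having established $\pi_u$, I would model $\mathcal{G}_u$ as $\mathcal{G}(N_t,\pi_u)$ at round $t$ and invoke the Bollob\'as--Erd\H{o}s estimate to say that the largest clique found among the local graphs has size approximately $2\log N_t/\log(1/\pi_u)$. Since the chosen transmitter broadcasts over a peer link with erasure probability $p'$, each vertex in the clique is erased from consideration with probability $p'$, so in expectation $2(1-p')\log N_t/\log(1/\pi_u)$ vertices are removed from $V$. This yields the recursion $N_{t+1}=N_t-2(1-p')\log N_t/\log(1/\pi_u)$. The update rule $\hat{p}_{\mathrm{eff}}(t+1)=\hat{p}_{\mathrm{eff}}(t)-(N_{t+1}-N_t)/(nr)$ is then justified by the same book-keeping argument used in Theorem~\ref{thm1}: the $N_{t+1}-N_t$ newly delivered packets are a fraction of the $nr$ total receptions needed across all clients, and this fraction is subtracted from the residual miss probability. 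The algorithm terminates at $T_c=\max\{t:N_t>0\}$.

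The main obstacle, and the step most in need of care, is the computation of $\pi_u$: unlike the centralized case where the base station possesses every packet, an edge in $\mathcal{G}_u$ is only meaningful when the prospective transmitter $c_u$ holds the two relevant packets, and this couples three clients rather than two. One must also argue that treating the four (respectively three) possession events as independent is asymptotically justified under the random-graph approximation, and that using the mean $N=nrp_{\mathrm{eff}}$ rather than the true binomial count of missing packets introduces only a lower-order error as $N\to\infty$. Once these approximations are accepted, the rest of the derivation is a mechanical transcription of the proof of Theorem~\ref{thm1}.
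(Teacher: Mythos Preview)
Your proposal is correct and follows essentially the same approach as the paper: both arguments mirror Theorem~\ref{thm1}, replace $p$ by $p_{\mathrm{eff}}$ for the initial vertex count and by $p'$ for per-round erasure, invoke the Bollob\'as--Erd\H{o}s clique-size estimate on the local random graph $\mathcal{G}_u(N_t,\pi_u)$, and compute $\pi_u$ by summing the two disjoint edge events conditioned on $i,k,u$ being distinct. Your derivation of $\pi_u$ and the bookkeeping update for $\hat{p}_{\mathrm{eff}}$ match the paper's exactly, and you are in fact more explicit than the paper about the independence approximations underlying the calculation.
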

\textbf{where $N_t$ is the size of the remaining number of vertices
found by the algorithm at the end of $t$'th transmission and
$\hat{p_{eff}}(t)$ approximates the probability of missing a packet
(or equivalently the probability of a packet to be in the lacks set
of a client) at round $t$, where $\hat{p}(0)$ is set to be $p$
before the algorithm starts.}
\begin{proof}
At each round of the algorithm, each node tries to find the largest
clique over its local graph $G_u$. Cliques obtained from each local
graph are compared and the largest is chosen for removal. According
to Eq. \ref{myequa} and considering the parameters $\pi_u$ and $p'$
for the cooperative scenario, \textbf{it is expected that
$2(1-p')\frac{\log N_t}{\log(\frac{1}{\pi_u})}$ vertices are removed
at round $t$}. This process will continue until all the nodes are
removed (The recursive relation continues while $N_t>0$).
\textbf{$\hat{p}_{eff}(t)$ should be updated after each round of
transmission where similar discussion to $\hat{p}(t)$ in Section
\ref{sec:BaseStation} is applied. This discussion is removed due to
space limitations. $\pi_u$ is updated accordingly at each round $t$}

The probability $\pi_u$ that two nodes are connected to each other
in $G_u$ is calculated as follows:
\begin{equation}
\begin{split}
\pi_u= P(v_{ij}, v_{kl}\in E_u)=\\
P[((p_j\in \Gamma_k) ,(p_l\in \Gamma_i),(p_j,p_l)\in
\Gamma_u),(i\neq k\neq u)]\\
 +P[(((p_j=p_l)\notin (\Gamma_i\cup \Gamma_k))\\
  ,((p_j=p_l)\in \Gamma_u),(i\neq k\neq u)]=\\
  P[(p_j\in \Gamma_k)P(p_l\in \Gamma_i)P(p_j\in \Gamma_u)(p_l\in
  \Gamma_u)+\\
  P(p_j\notin \Gamma_i)P(p_j\notin \Gamma_k)P(p_j\in
  \Gamma_u)]P(i\neq k \neq u)=\\
  [(1-p_{eff})^4+(1-p_{eff})p_{eff}^2]\frac{(|C^m|-2)(|C^m|-1)}{|C^m|^2}
\end{split}
\end{equation}
\end{proof}
\subsection{Numerical Experiments}
Similarly to the numerical experiments for the case of centralized
packet recovery in Section \ref{sec:Basestation}, we have studied
two issues for the cooperative scenario: (a) the effect of
parameters $p$ and $r$ on the performance of the entire system and
(b) a comparison between simulations and the result of theorem
\ref{thm2}. However, it should be noted that in the cooperative
scenario other parameters such as the probability of erasure over
the link between two clients and also the size of a cluster are
incorporated.

Also, we should take this important fact into account that the data
rates for short range links between the clients are different from
the base station downlink data rates and are expected to be
typically larger which is an advantage for cooperation. Because of
the mentioned difference between the data rates of the short range
links and the long range base station downlink, it might not be
meaningful to count the total number of packets transmitted until
every clients would be able to decode, but we should separate the
base station broadcast phase from the cooperative packet recovery
phase where the clients use their short range links.

We have considered a set of $n=20$ clients and an arbitrary cluster
of size $|C^1|=8$ has been chosen. We assumed the probability of
erasure for the short range links is relatively small and is set to
be $p'=0.05$ for all the clients and is fixed during the recovery
phase. We have run a set of $1000$ experiments for each value of $p$
which takes values $0.1, 0.2, 0.3, 0.4$ and $0.5$. Fig.
\ref{fig:myfig3} shows the average total number of transmissions for
different values of $r$ ranging from $r=1$ to $r=20$. Clearly, the
total number of transmitted packets is expected to increase for
larger values of $r$ or $p$.

\begin{figure}[t]
\begin{center}
  \includegraphics{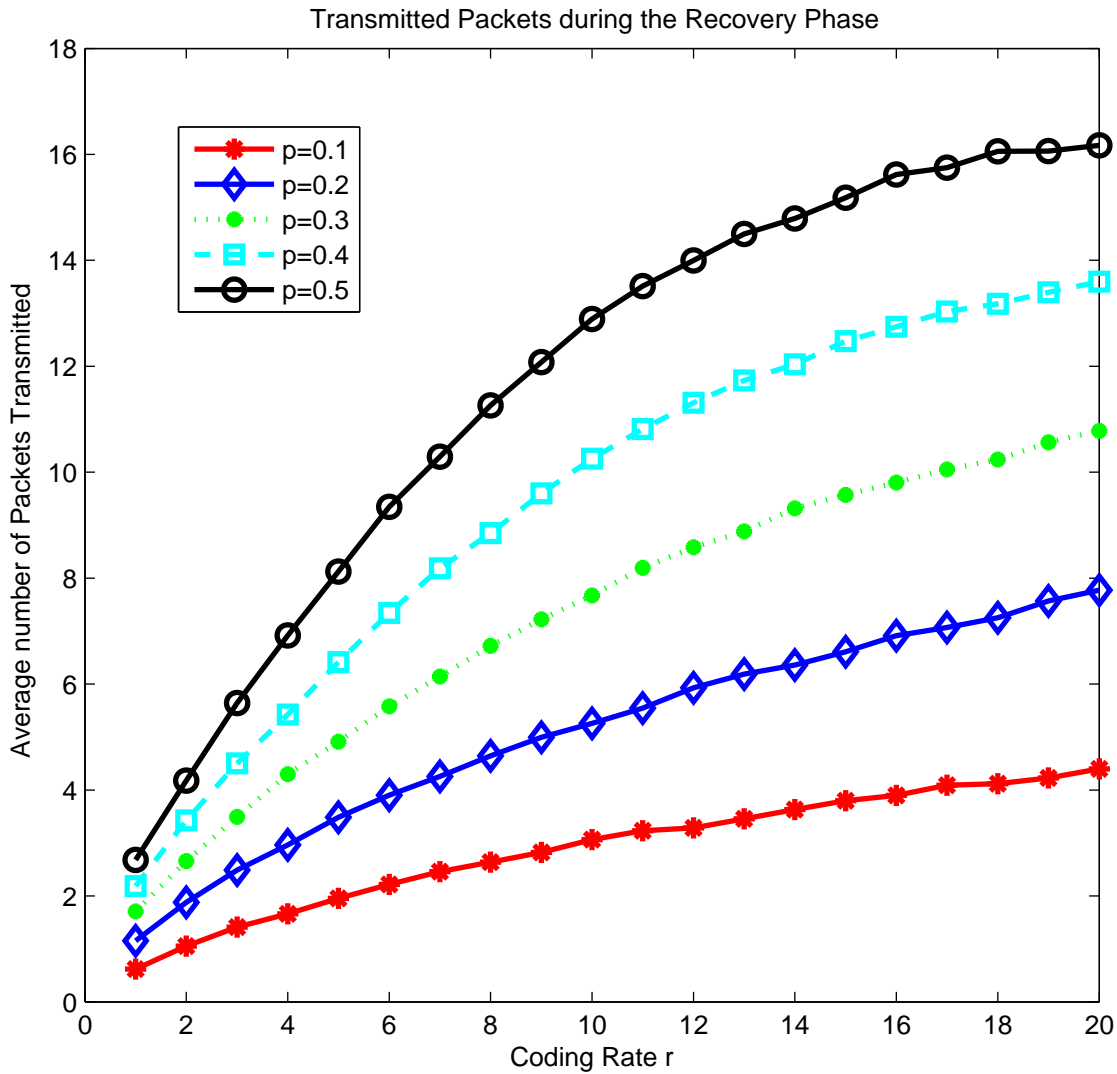}
  \caption{Total number of transmissions in the cooperative phase for different values of $r$ and $p$}\vspace{-1em} \label{fig:myfig3}
\end{center}
\end{figure}
Fig \ref{fig:myfig4} shows gain of network coding during the
cooperative recovery phase. Gain of network coding is defined as the
ratio of the total number of uncoded packets (denoted by $U_c$) i.e.
the number of packets required to be transmitted to satisfy the
demands of all the clients over a shared broadcast channel with
erasure probability $p'$ in the cooperative phase to the number of
actual transmitted packets during the cooperative recovery (denoted
by $Tc$). Therefore the gain of network coding would be
$\frac{U_c}{T_c}$ where
\begin{equation}
U_c=\frac{|\bigcup_{i': c_{i'}\in C^m} \Omega_{i'}|}{1-p'}
\end{equation}
 Fig. \ref{fig:myfig3} shows the average number of transmitted packets
 versus parameter $r$ for different values of $p$. Fig.
 \ref{fig:myfig4} shows the gain of network coding over uncoded
 transmissions. Gain of network coding implies how large the number
 of satisfied clients at each round rounds of transmission (those ones who can decode one packet
 at that round) is, which is a function of both $p$ and $r$.
 However,  as Fig. \ref{fig:myfig4} indicates the gain of network
 coding tends to be small if both the parameters are large, as all
 the clients have missed a large number of packets and the amount of
 side information at each client would be small. Therefore the
 opportunities of network coding tends to be less.
\begin{figure}[t]
\begin{center}
  \includegraphics [scale=0.5]{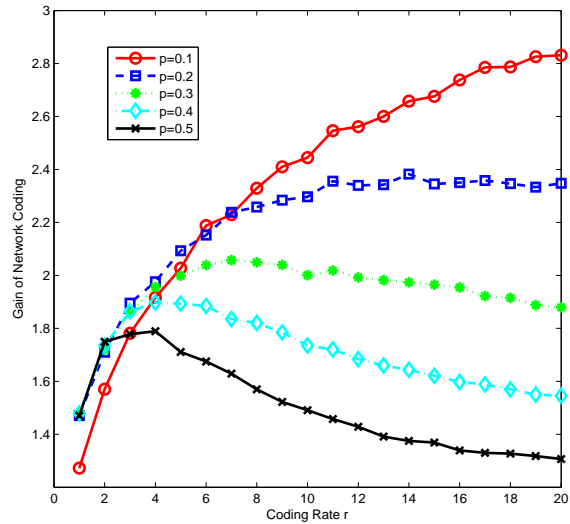}
  \caption{Gain of network coding in the cooperative phase}\vspace{-1em} \label{fig:myfig4}
\end{center}
\end{figure}
The other issue investigated in our numerical experiments is to
verify the result of theorem \ref{thm2}. We have considered $n=50$
clients and $r=40$. The probabilities of erasure are set to be
$p=0.3$ and $p'=0.1$. The number of removed nodes in each
transmission is compared for the simulations and as predicted in
Theorem \ref{thm2} in Fig. \ref{fig:myfig5}.
\begin{figure}[h]
\begin{center}
  \includegraphics [scale=0.5]{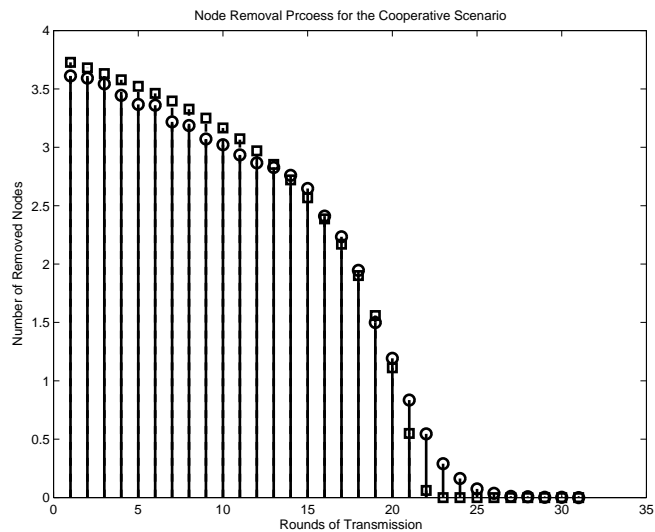}
  \caption{Theory vs. simulations for the packet recovery via cooperation.}\vspace{-1em} \label{fig:myfig5}
\end{center}
\end{figure}

\section{Conclusion} \label{sec:Conclusion}
We introduced a new method for combining unicasts to provide a
secure transmission method over a shared broadcast packet erasure
channel in a way that the secrecy of individuals is preserved. We
considered two different scenarios where in the first scenario the
base station should accomplish the transmission process until
everyone would be able to decode but in the second one the clients
were separated from the base station after an initial round of
transmission and cooperate with each other to recover the missing
packets. The amount of transmissions required for each scenario has
been evaluated theoretically and experimentally. A trade-off between
the total number of transmissions and secrecy against statistical
brute-force attacks can be observed. A more general information
theoretic analysis on the secrecy of the proposed method is an
interesting open problem and is not straightforward.

\IEEEpeerreviewmaketitle

\section*{Acknowledgment}

This work was supported under Australian Research Council Discovery
Projects funding scheme (project no. DP0984950 and project no.
DP120100160).

\bibliographystyle{IEEEtran}

\end{document}